\documentclass[conference]{IEEEtran}
\usepackage{amsmath,amsfonts}
\usepackage{algorithmic}
\usepackage{algorithm}
\usepackage{array}
\usepackage{textcomp}
\usepackage{url}
\usepackage{verbatim}
\usepackage{graphicx,flushend}
\usepackage{cite,color,tcolorbox}
\usepackage{amssymb,setspace}
\usepackage{amsthm}
\usepackage{subcaption} 
\newtheorem{proposition}{Proposition}

\usepackage{hyperref}
\usepackage{geometry}
\begin{document}
\title{Joint Communication and Trajectory Design for Movable Antenna Systems}
\IEEEoverridecommandlockouts
\author{\IEEEauthorblockN{Jiaxuan Li\IEEEauthorrefmark{1}, Weidong Mei\IEEEauthorrefmark{1}, Changhao Liu\IEEEauthorrefmark{1}, Zhi Chen\IEEEauthorrefmark{1}, Boyu Ning\IEEEauthorrefmark{1}, and Rui Zhang\IEEEauthorrefmark{2}}
\IEEEauthorblockA{\IEEEauthorrefmark{1}National Key Laboratory of Wireless Communications,\\ University of Electronic Science and Technology of China, Chengdu, China  611731}
\IEEEauthorblockA{\IEEEauthorrefmark{2}Department of Electrical and Computer Engineering, National University of Singapore, Singapore 119077}
Emails: jiaxuanli@std.uestc.edu.cn, wmei@uestc.edu.cn, lch3001@163.com, chenzhi@uestc.edu.cn,\\ boydning@outlook.com, elezhang@nus.edu.sg}
\maketitle

\begin{abstract}
Movable antennas (MAs) have attracted significant attention in wireless communications due to their ability to reconfigure channel conditions by flexibly adjusting the antenna positions within a confined region. However, MA movement generally incurs a non-negligible delay, which may significantly limit the data transmission time at optimized positions. To tackle this challenge, this paper investigates a new joint communication and trajectory optimization problem, where each MA transmits while moving along an optimized trajectory to prolong the effective data transmission time. Focusing on a single-MA system, our goal is to maximize the average data rate by optimizing the MA's positions over time, subject to its maximum velocity constraints. However, this continuous-time antenna position optimization problem is highly non-convex and challenging to solve. To tackle this challenge, we first consider a special case with two channel paths and derive the optimal MA trajectory in closed form. For other general cases, we ingeniously reformulate the average rate maximization problem into a fixed-hop shortest path problem in graph theory by sampling the antenna movement region into a multitude of discrete points, and solve it optimally. Simulation results demonstrate that our proposed algorithm can significantly improve the data rate compared to other baseline schemes.
\end{abstract}

\section{Introduction}
Existing communication systems mainly rely on fixed-position antennas (FPAs). However, due to small-scale fading in practical multi-path channels, FPAs may be trapped at unfavorable locations with deep fading, which fundamentally limits their communication performance. Furthermore, from an array signal processing perspective, FPAs result in fixed spatial correlation among array responses corresponding to different angles, thus limiting the flexibility of beamforming designs.

To tackle these limitations, movable antennas (MAs) (and their equivalents, such as fluid antennas (FAs)) have emerged as promising solutions. Compared to FPAs, MAs can fully leverage the spatial degrees of freedom (DoFs) through dynamically adjusting their positions within a confined movement region at the transmitter/receiver, thereby circumventing deep-fading positions and enhancing beamforming flexibility \cite{1, 2, zhu2026matot, li2025marelay}. Thanks to these unique advantages, MA systems and the associated antenna position optimization problems have been widely investigated in a wide range of scenarios, including physical layer security \cite{3}, integrated sensing and communication (ISAC) \cite{5}, near-field communications \cite{6}, wide-beam coverage \cite{7}, mobile edge computing \cite{CPCMEC2024}, intelligent reflecting surface (IRS)-aided communications \cite{WXIRS2025}, unmanned aerial vehicle (UAV)-enabled communications \cite{LCH6DMA2026}, cognitive radio \cite{wei2024spectrumsharing,zhang2026cognitive}, and so on. MA systems with joint antenna positioning and rotation optimization have also been studied in \cite{shao20256dma, shao20256dmamag, xie20266dma}.
\begin{figure}[!t]
	\centering
	\includegraphics[width=3.2in]{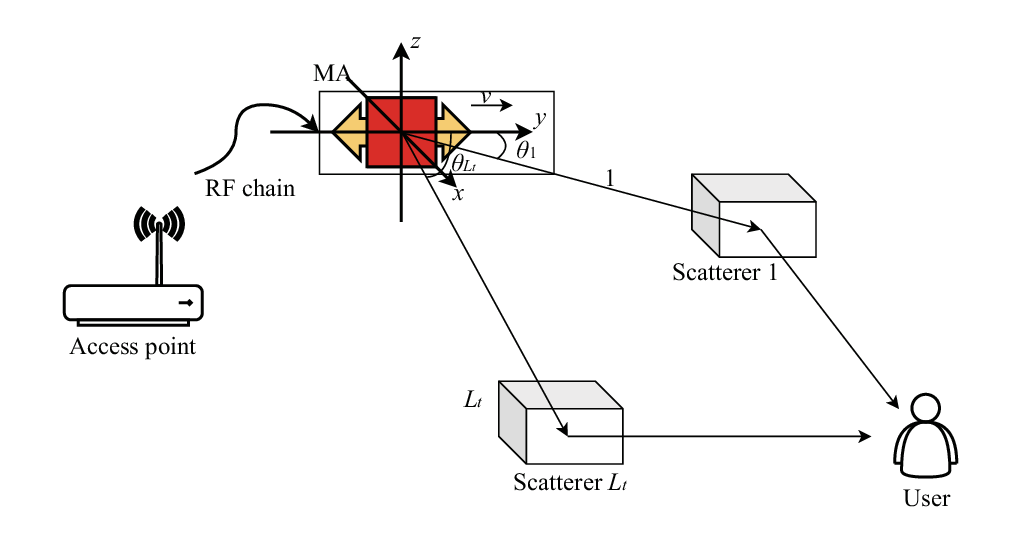}
	\caption{MA-SISO system with joint communication and antenna movement.}
	\label{fig.1}
    \vspace{-9pt}
\end{figure}

Despite these advances, existing works mostly optimize the MA system performance by simply determining a set of optimal destination antenna positions within the movement region. All antennas are then mechanically relocated to these positions for data transmission \cite{1, 2}. However, given the large mechanical movement delay in practice, this strategy may severely limit the transmission time available at the destination positions. Moreover, it overlooks the communication performance during the antenna movement phase, which could contribute more significantly to the overall data rate. In light of these facts, this paper investigates a new joint communication and trajectory optimization problem, where each MA transmits while moving along an optimized trajectory to improve data transmission quality throughout the entire communication period. It is worth noting that MA trajectory designs have been studied in some recent works. For example, in \cite{10,19}, the authors optimized MA movement trajectory to minimize movement delay and maximize the energy efficiency (by accounting for mechanical power consumption), respectively. In particular, both works revealed that for one-dimensional (1D) antenna movement, the optimal trajectory admits a closed-form solution, with each MA moving at the maximum velocity. However, both works still follow the conventional movement strategy as in prior works, without accounting for the communication performance during the movement phase.

Motivated by the above gap, this paper develops a joint communication and trajectory optimization framework for a single-MA system, as shown in Fig. \ref{fig.1}. To characterize its optimal performance, we aim to maximize the average data rate at the receiver by optimizing the MA's positions over time subject to its maximum velocity constraint. However, such a continuous-time antenna position optimization problem turns out to be much more challenging to solve compared with the one-time antenna position optimization problems as considered in prior works \cite{3,5,6,7,CPCMEC2024,WXIRS2025,LCH6DMA2026}. To tackle this challenge, we first consider a special case with two channel paths only and derive the optimal antenna trajectory in closed-form. We then consider the general multi-path case and ingeniously reformulate the original problem as a fixed-hop shortest path problem by sampling the movement region into a multitude of discrete points. Simulation results demonstrate that the proposed graph-based algorithm can achieve a higher rate performance than other benchmark schemes by properly balancing the communication performance and movement time.\vspace{-3pt}

\begingroup
\allowdisplaybreaks
\section{System Model and Problem Formulation}
\subsection{System Model}
\begin{figure}[!t]
	\centering
	\includegraphics[width=3in]{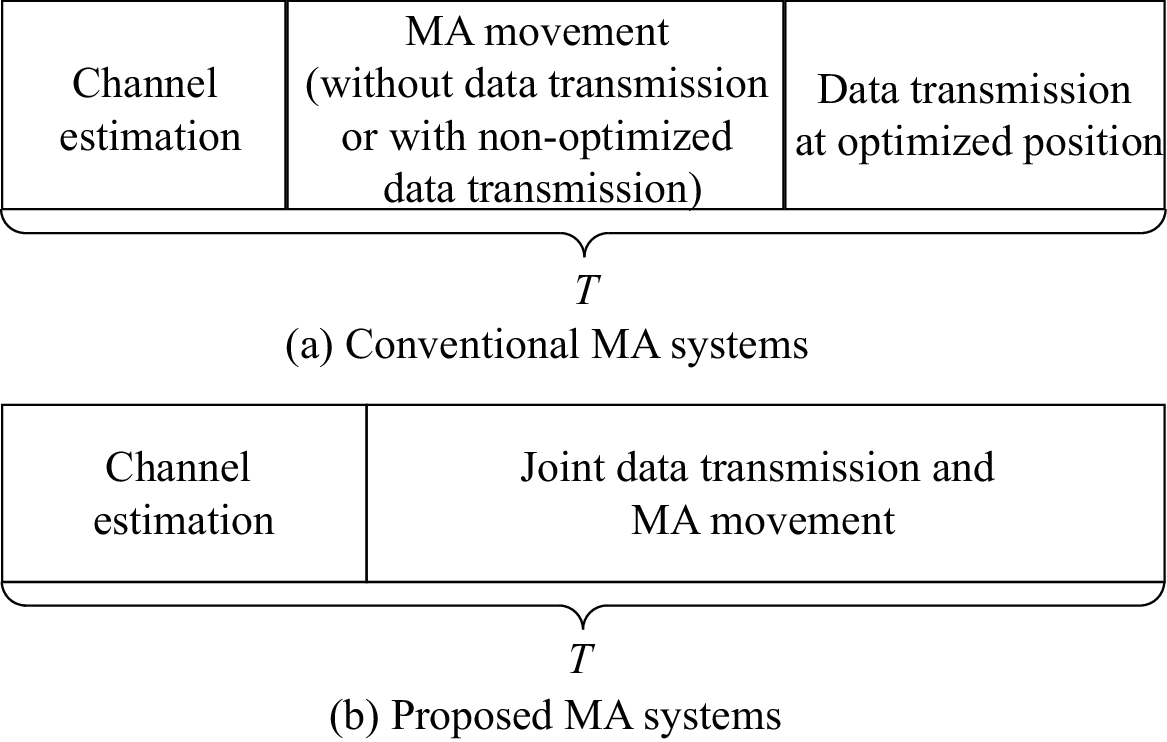}
	\caption{Transmission protocol of the proposed MA system.}
	\label{fig.2}
    \vspace{-9pt}
\end{figure}
As depicted in Fig. \ref{fig.1}, we consider a single-input single-output (SISO) MA system consisting of a single-MA transmitter (e.g., an access point (AP)) and a single-FPA receiver. Driven by a stepper motor, the MA can be flexibly positioned within a linear array of length $D$, denoted as $\mathcal{C}_t$. For ease of exposition, we assume a block-fading channel model and denote by $T$ the duration of each block, during which the channel can be viewed as approximately constant. At the beginning of the block, the transmitter estimates the channel at each position within $\mathcal{C}_t$ by leveraging existing channel estimation technologies dedicated to MAs (see e.g., \cite{ma2023compressed,xiao2024compressed,zhang2025successive,huang2025cnn}). In contrast to the existing works that neglect communication performance during MA movement, as illustrated in Fig.\;\ref{fig.2}(a), we propose a joint optimization framework as illustrated in Fig.\;\ref{fig.2}(b), where the MA follows an optimized trajectory while simultaneously transmitting data to the receiver based on the channel estimates.

Let $x_0 \in {\cal C}_t$ denote the initial position of the MA. To facilitate the MA trajectory design, we discretize the time duration $T$ into $K$ time slots each with a length of $\tau=\frac{T}{K}$, where $K$ is assumed to be sufficiently large such that the channel remains approximately static even when the MA moves at the maximum velocity. Let $v\left[k\right]$ and $x\left[k\right]$ denote the MA's velocity and position in time slot $k$, respectively. It follows that
\begin{equation}
	x\left[k\right] = x\left[k-1\right] + v\left[k\right]\tau, k\in \mathcal{K}\triangleq \{1, 2, \ldots, K\},\label{eq1}
\end{equation}
and the MA's trajectory can be denoted as a sequence $\left\lbrace x\left[0\right], x\left[1\right], \ldots, x\left[K\right]\right\rbrace$, with $x[0]=x_0$. Moreover, due to the finite power of practical stepper motors, the MA has a maximum moving speed, denoted as $v_{\max}$, which leads to
\begin{equation}
	 \left| x\left[k\right] - x\left[k-1\right]\right| \leq v_{\max}\tau, k\in \mathcal{K}. \label{eq2}
\end{equation}

In this paper, we adopt the field-response channel model to characterize the channel between the MA and the receiver \cite{11}. Let $L_t$ denote the number of multi-path components and $\theta_l$ denote the angle of departure (AoD) of the $l$-th path, $1 \leq l \leq L_t$.
Accordingly, for the $l$-th path, the phase-shift difference of the MA in time slot $k$ relative to the reference point is given by $\Phi_l\left[k \right] =\frac{2\pi}{\lambda} x\left[k\right]\cos{\theta_l}$. The field-response vector (FRV) in time slot $k$ is given by
\begin{equation}
 \mathbf{g}\left[k\right]  = \left[e^{j\Phi_1\left[k \right]}, e^{j\Phi_2\left[k \right]}, \ldots, e^{j\Phi_{L_t}\left[k \right]} \right]^T.\label{eq7}
\end{equation}
Consequently, the overall channel in time slot $k$ is given by
\begin{equation}
	{h}\left[ k\right] = \mathbf{a}^H\mathbf{{g}}[k],\label{eq8}
\end{equation}
where $\mathbf{a} = \left[a_1, a_2, \ldots, a_{L_t} \right]^T$ denotes the multi-path channel response vector, and $a_l$ denotes the complex coefficient of the $l$-th path. The received signal at the receiver in time slot $k$ is given by
\begin{equation}
	y\left[ k\right]= {h}\left[ k\right] \sqrt{P_t} s\left[k\right]  + n\left[k\right],\label{eq9}
\end{equation}
where $s\left[k\right]$ and $n\left[k\right]$ denote the transmitted symbol with a unit power and the additive white Gaussian noise (AWGN) at the receiver in time slot $k$, respectively. Let $P_t$ denote the transmit power in each time slot. The achievable rate at the receiver in time slot $k$ is thus given by
\begin{equation}
	R\left[k\right] = \log_2\left(1 + \frac{P_t\left|{h}\left[ k\right]\right| ^2}{\sigma^2} \right), k \in {\cal K},\label{eq6}
\end{equation}
where $\sigma^2$ denotes the average noise power at the receiver.\vspace{-3pt}

\subsection{Problem Formulation}
This paper aims to maximize the average achievable rate at the receiver over the transmission duration $T$, i.e., $\frac{1}{K}\sum^K_{k=1} R\left[k\right]$, by optimizing the MA trajectory $x[k], k \in {\mathcal{K}}$. 
The associated optimization problem can be formulated as
\begin{align}
    \left(P1 \right):\ \  &\underset{x\left[k\right], k\in\mathcal{K}}{\max}\ \  \frac{1}{K}\sum^K_{k=1} R\left[k\right] \label{eq:11A}\\
    \mbox{s.t.}\ &\left|x\left[k \right] - x\left[k-1 \right]\right| \leq v_{\max}\tau, k\in\mathcal{K}, \label{eq:11B}\\
    &\ x_0 = x\left[0\right],  \label{eq:11C}\\
    &\ x\left[k\right]\in\mathcal{C}_t, k\in\mathcal{K}. \label{eq:11D}
\end{align}
However, $\left(P1 \right)$ is a non-convex optimization problem due to the highly nonlinear channel responses with respect to (w.r.t.) the antenna position. Moreover, as the MA's positions in different time slots are coupled due to the constraints in \eqref{eq:11B}, existing antenna position optimization algorithms become ineffective and cannot be directly applied. In the following section, a graph-based approach will be proposed to obtain an optimal solution to $\left(P1 \right)$ by transforming it into a discrete problem.

\section{Optimal Solution to $(P1)$}
In this section, we first show that a closed-form optimal solution to $(P1)$ can be derived under $L_t=2$, followed by an optimal graph-based approach for other general cases.\vspace{-2pt}

\subsection{Two-Path Case}
First, we assume that there are only $L_t=2$ channel paths between the transmitter and the receiver. In this case, the channel power gain in time slot $k$ can be simplified as
\begin{equation}
	\left|h[k]\right|^2 =  \left|a_1 \right|^2 + \left|a_2\right|^2 + 2\left|a_1 \right|\left|a_2 \right|\cos\varphi[k],\label{eq12}
\end{equation}
where $\varphi[k] = \frac{2\pi}{\lambda}x\left[k\right]\left(\cos\theta_1-\cos\theta_2 \right)+\Delta\phi$, with $\Delta\phi=\angle{a_2}-\angle{a_1}$. Notably, the superimposed channel power gain of the two paths in \eqref{eq12} demonstrates a periodic character w.r.t. the antenna position. Particularly, it attains the maximum at $\cos\varphi[k] = 1$, corresponding to $x[k]=\frac{\left(2n\pi-\Delta\phi\right) \lambda}{2\pi\left(\cos\theta_1-\cos\theta_2 \right)}, n\in\mathbb{Z}$, where $\mathbb{Z}$ denotes the set of integers.

Accordingly, we define a set of coherent positions within ${\cal C}_t$, which render the two paths coherently combined at the receiver, i.e., $\mathcal{X}=\left\lbrace \hat{x}| \hat{x} = \frac{\left(2n\pi-\Delta\phi\right) \lambda}{2\pi\left(\cos\theta_1-\cos\theta_2 \right)}, n\in\mathbb{Z}, \hat{x} \in\mathcal{C}_t\right\rbrace$. Based on this, we provide the following proposition to obtain the optimal solution to (P1) for $L_t=2$.
\begin{proposition}
	In the case of $L_t=2$, let $x^* = \arg\underset{\hat{x}\in\mathcal{X}}{\min}\left|\hat{x}-x_0\right|$ denote the closest coherent position to $x_0$. The optimal solution to $(P1)$ is given by
	\begin{equation}
		x\left[k\right]=x_0+\frac{x^*-x_0}{\left|x^*-x_0\right| }v_{\max} \tau \min\left\lbrace \frac{\lvert x^*-x_0 \rvert}{v_{\max}\tau},K\right\rbrace , k \in {\mathcal{K}}.\label{eq01}
	\end{equation}
\end{proposition}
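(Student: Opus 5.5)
The plan is to prove optimality slot by slot. First I derive an upper bound on each $R[k]$ that holds for every feasible trajectory of $(P1)$. Then I show the trajectory in \eqref{eq01} attains that bound in every slot simultaneously. Since the objective is a sum of the $R[k]$'s, this gives optimality without handling the coupling in \eqref{eq:11B} directly.

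\emph{Step 1 (reachable set).} From \eqref{eq:11B}, \eqref{eq:11C} and the triangle inequality, any feasible trajectory satisfies $|x[k]-x_0|\le k v_{\max}\tau$. Hence $x[k]$ lies in $\mathcal{I}_k\triangleq[x_0-kv_{\max}\tau,\,x_0+kv_{\max}\tau]\cap\mathcal{C}_t$. Because $R[k]$ is increasing in $|h[k]|^2$, and by \eqref{eq12} $|h[k]|^2$ is increasing in $\cos\varphi[k]$, we get
\[
R[k]\le \log_2\Bigl(1+\tfrac{P_t}{\sigma^2}\bigl(|a_1|^2+|a_2|^2+2|a_1||a_2|\max_{x\in\mathcal{I}_k}\cos\varphi(x)\bigr)\Bigr).
\]

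\emph{Step 2 (structure of $\cos\varphi$).} Let $d(x)$ be the distance from $x$ to the nearest point of the full coherent lattice $\{\frac{(2n\pi-\Delta\phi)\lambda}{2\pi(\cos\theta_1-\cos\theta_2)}\}_{n\in\mathbb{Z}}$, whose spacing is $P=\lambda/|\cos\theta_1-\cos\theta_2|$. Then $\cos\varphi(x)=\cos(2\pi d(x)/P)$, and since $d(x)\le P/2$ this is a strictly decreasing function of $d(x)$. Maximizing $\cos\varphi$ over $\mathcal{I}_k$ is therefore the same as minimizing $d$ over $\mathcal{I}_k$. Here I assume $\mathcal{X}\neq\emptyset$, so that $x^*$ exists.

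\emph{Step 3 (the bound is attained).} Set $\delta=|x^*-x_0|$ and consider two cases.
\begin{itemize}
\item If $kv_{\max}\tau\ge\delta$, then $x^*\in\mathcal{I}_k$, the bound equals the coherent value with $\cos\varphi=1$, and \eqref{eq01} gives $x[k]=x^*$.
\item If $kv_{\max}\tau<\delta$, then \eqref{eq01} places $x[k]$ at distance $kv_{\max}\tau$ from $x_0$ in the direction of $x^*$. This point lies in $\mathcal{C}_t$ because it is between $x_0$ and $x^*$, and its distance to $x^*$ is $\delta-kv_{\max}\tau$. Now take any $x\in\mathcal{I}_k$ and any lattice point $\hat x$. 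Either $\hat x=x^*$, or $\hat x$ is another lattice point. A lattice point on the same side as $x^*$ is beyond $x^*$ and hence farther away. A lattice point on the opposite side of $x_0$ satisfies $|\hat x-x_0|\ge\delta$, because $x^*$ is closest among points of $\mathcal{C}_t$ and any lattice point outside $\mathcal{C}_t$ on a side where $\mathcal{C}_t$ continues to $x$ is even farther. In either case $|x-\hat x|\ge|\hat x-x_0|-|x-x_0|\ge\delta-kv_{\max}\tau$. Hence $d(x)\ge\delta-kv_{\max}\tau=d(x[k])$, since the proposed $x[k]$ has no nearer lattice point.
\end{itemize}
In both cases $x[k]$ from \eqref{eq01} minimizes $d$ over $\mathcal{I}_k$, so it attains the Step 1 bound. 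It also satisfies \eqref{eq:11B}: consecutive increments are $v_{\max}\tau$ until $x^*$ is reached, and then zero. (The final partial step when $\delta/(v_{\max}\tau)$ is not an integer is read as clipping at $x^*$.) Summing over $k$ gives optimality.

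The main obstacle is Step 3. The delicate point is ruling out that moving away from $x^*$ toward a coherent position on the other side, or toward the boundary of $\mathcal{C}_t$, gives a smaller $d$. This needs the triangle-inequality argument together with the minimality of $x^*$ and the fact that $d\le P/2$ keeps $\cos\varphi$ monotone in $d$. Care is also needed for lattice points that lie outside $\mathcal{C}_t$ but near its edge.
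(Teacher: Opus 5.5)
The gap is in Step~3: your claim that a coherent point outside $\mathcal{C}_t$ on the far side of $x_0$ is at least $\delta$ from $x_0$ is false, and the proposition itself fails in that case. Your Steps~1--2 and the overall strategy are sound: bound each $R[k]$ by its maximum over the reachable set $\mathcal{I}_k$, then show \eqref{eq01} attains all these bounds at once. That is tighter than the paper's Appendix, which argues informally through monotonicity along directions and omits Case~2.

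Concretely, a coherent point $\hat x\notin\mathcal{C}_t$ on the far side of $x_0$ need not satisfy $|\hat x-x_0|\ge\delta$. Minimality of $x^*$ only constrains lattice points inside $\mathcal{C}_t$. Here is a counterexample to the proposition:
\begin{itemize}
\item Take $\mathcal{C}_t=[0,D]$ with $D\ge P$, pick $\Delta\phi$ so the lattice is $\{-0.1P+nP\}_{n\in\mathbb{Z}}$, and let $x_0=0.1P$.
\item Then $x^*=0.9P$ and $\delta=0.8P$. But $d(x_0)=0.2P$ is attained at $-0.1P\notin\mathcal{C}_t$, and a valley at $0.4P$ separates $x_0$ from $x^*$.
\item If $Kv_{\max}\tau\le0.1P$, the trajectory \eqref{eq01} has $d(x[k])=0.2P+kv_{\max}\tau$, so its rate strictly decreases slot by slot.
\item Moving toward the edge $0$ gives $d=0.2P-kv_{\max}\tau$, and even staying at $x_0$ does strictly better.
\end{itemize}
Both of your claims, $d(x)\ge\delta-kv_{\max}\tau$ on $\mathcal{I}_k$ and $d(x[k])=\delta-kv_{\max}\tau$, fail here.

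So what is missing is a hypothesis, not a better argument. You need $x^*$ to be the nearest point of the \emph{entire} coherent lattice to $x_0$, i.e., $d(x_0)=|x^*-x_0|$. Equivalently, no valley lies strictly between $x_0$ and $x^*$. The paper's proof uses this tacitly when it asserts that moving from $x_0$ toward $x^*$ ``always yields an increasing channel power gain,'' so it has the same gap.

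With this assumption your Step~3 reduces to one line. Every lattice point satisfies $|\hat x-x_0|\ge\delta$, hence $|x-\hat x|\ge\delta-|x-x_0|\ge\delta-kv_{\max}\tau$ for all $x\in\mathcal{I}_k$, and the proposed $x[k]$ attains this bound. You should state the assumption explicitly, together with $\mathcal{X}\neq\emptyset$. Also say that you read \eqref{eq01} with $k$ rather than $K$ inside the $\min$; that is clearly what is intended and what you used.
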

\begin{proof}
	Please refer to the Appendix.
\end{proof}	
\textbf{Proposition 1} indicates that in the case of $L_t=2$, the optimal MA trajectory is achieved by moving the MA at the maximum speed towards $x^*$. When the time duration is sufficiently long, i.e., $v_{\max}\tau K \ge \lvert x^*-x_0 \rvert$, the MA will stop at $x^*$ for the remaining time. This is expected, since for any initial position $x_0$, moving towards $x^*$ always results in an increasing channel power gain with the receiver, and moving at the maximum speed helps attain the maximum channel power gain more quickly.

\subsection{Multi-Path Case} 
For other general cases with $L_t \geq 3$, we propose a graph-based method in this subsection. First, we discretize the antenna movement region ${\cal C}_t$ into $N$ grids, with an equal spacing between the centers of any two adjacent grids given by $\delta_s = D/N$. The position of the $n$-th grid center is thus given by $p_n = nD/N$. Note that a sufficiently large value of $N$ should be adopted to ensure the channel over each grid is approximately constant. Let $\mathcal N \triangleq \{1,2,\cdots,N\}$ denote the set of all grid centers. Thus, the position of the MA can be approximated as one of the grid centers in $\mathcal N$, and the initial position $x_0$ can be approximated as the $s$-th grid center, with $s=x_0/\delta_s$, which is assumed to be an integer. Next, we construct a graph $G = \left(V, E\right)$, where the vertex set $V$ is defined as the set of the grids, i.e., $V = \mathcal{N}$. Moreover, we add an edge between two vertices $i$ and $j$ if and only if $\lvert i - j \rvert \leq d_{\max} \triangleq v_{\max}\tau/\delta_s > 1$, i.e., the $j$-th grid can be reached from the $i$-th grid (and vice versa) within a time slot. Hence, the edge set can be constructed as
\begin{equation}
	E = \left\lbrace(i,j)\mid\left| i-j \right| \leq d_{\max}, i, j\in V \right\rbrace. \label{eq13}
\end{equation}
It should be mentioned that each vertex has a self-loop (corresponding to stopping at this position) based on \eqref{eq13}, making $G$ cyclic.

Following the above, one ``dummy" vertex $N+1$ is added to $G$, and an edge is added from each vertex in $V$ to the vertex $N+1$, leading to new vertex and edge sets given by
\begin{align}
	\tilde{V} = V \cup \left\lbrace N+1\right\rbrace, \;
	\tilde{E} = E \cup \left\lbrace(j, N+1)\mid j \in V \right\rbrace, \label{eq14}
\end{align}
respectively. Given the new graph $\tilde{G}=(\tilde{V}, \tilde{E})$, we assign each edge $(i,j)\in\tilde{E}$ with a weight given by
\begin{equation}
    {W}_{i,j}=-\log_2\left(1 + \frac{P_t\left|h\left(p_i\right)  \right| ^2}{\sigma^2} \right), \forall(i,j)\in\tilde{E},
\label{eq15}
\end{equation} 
where $h\left(p_i\right)$ denotes the channel with the user when the MA is positioned at the $i$-th grid center. Based on the above, it can be shown that $\left(P1\right)$ is equivalent to the following fixed-hop shortest path problem $\left(P2\right)$.
\begin{tcolorbox}[standard jigsaw, opacityback=0]
$\left(P2\right):$ \textit{Find the $(K+1)$-hop shortest path from vertex $s$ to vertex $N+1$ in graph $\tilde{G}$.}
\end{tcolorbox}

Note that $(P2)$ is a fixed-hop shortest path problem, for which the optimal solution can be obtained by invoking dynamic programming (DP). The details of DP can be found in \cite{23}, where a similar fixed-hop shortest path problem is formulated and solved in a recursive manner. It can be shown that the DP incurs a complexity order of $\mathcal{O}(\lvert \tilde V \rvert \lvert \tilde E \rvert) = \mathcal{O}\left(KN^2\right)$.

\textbf{Remark 1.} \textit{Note that although this paper only considers a point-to-point scenario, the proposed graph-based algorithm is also applicable to a wide range of scenarios, as long as the objective function can be expressed as the sum of a given utility over time. For example, consider the presence of a single-antenna eavesdropper in our system. If our goal is to maximize the average secrecy rate, we can simply redefine the weight $W_{i,j}$ as the negative of the secrecy rate when the antenna is positioned at the $i$-th grid center, and then invoke DP to solve the corresponding fixed-hop shortest path problem.}
\begin{figure}[!t]
	\centering
	\includegraphics[width=3in]{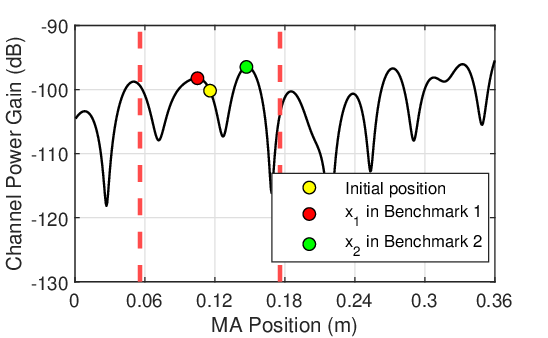}
	\caption{Illustration of the destination positions in the myopic- and far-sighted schemes.}
	\label{fig.9}
    \vspace{-6pt}
\end{figure}
\begin{figure*}[hbtp]
	\centering
	\begin{subfigure}{0.32\textwidth}
		\centering
		\includegraphics[width=\linewidth]{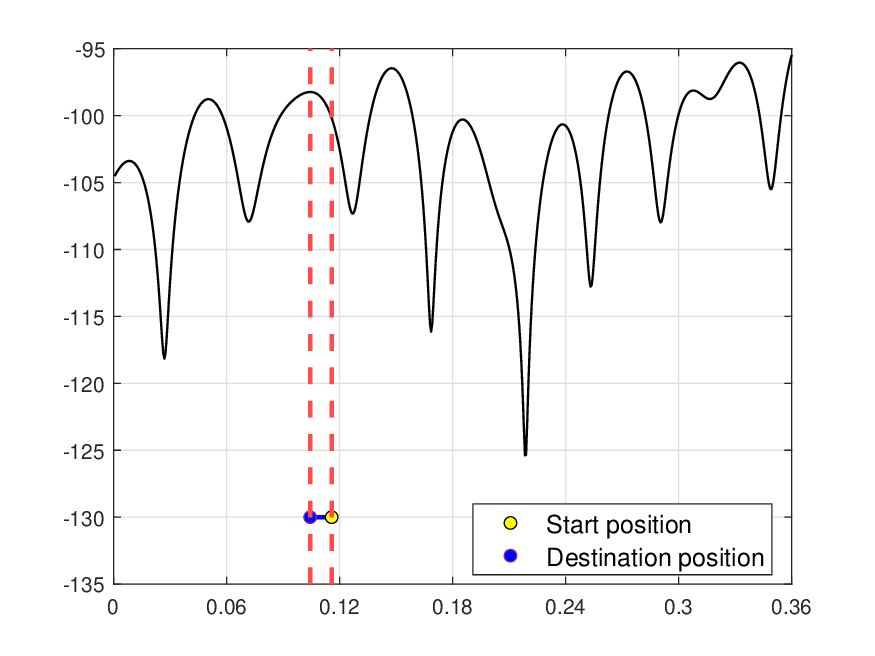} 
		\caption{Graph-based, $T=0.5$ s.}
		\label{fig:sub11}
	\end{subfigure}
	\hfill
	\begin{subfigure}{0.32\textwidth}
		\centering
		\includegraphics[width=\linewidth]{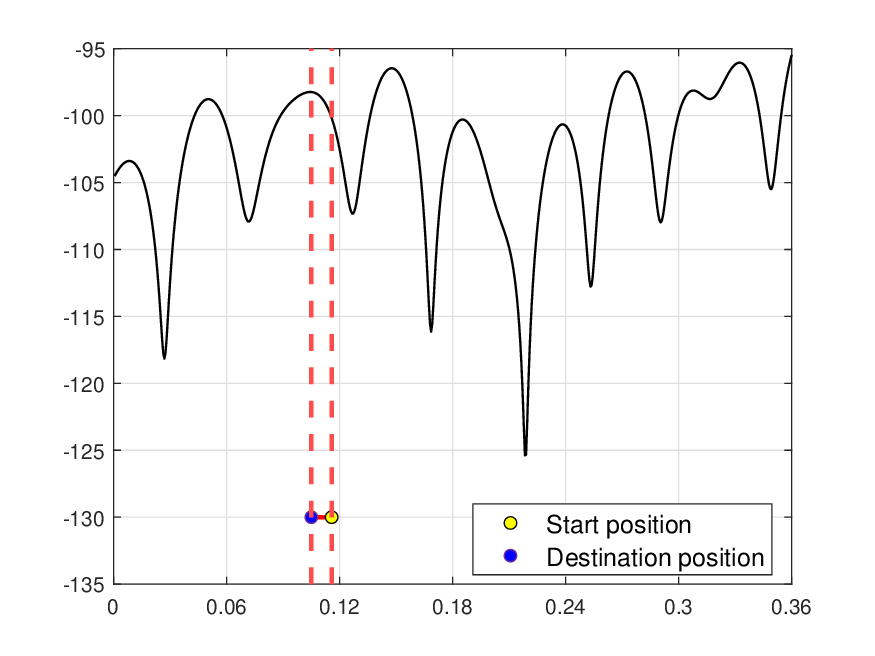} 
		\caption{Myopic, $T=0.5$ s.}
		\label{fig:sub12}
	\end{subfigure}
	\hfill
	\begin{subfigure}{0.32\textwidth}
		\centering
		\includegraphics[width=\linewidth]{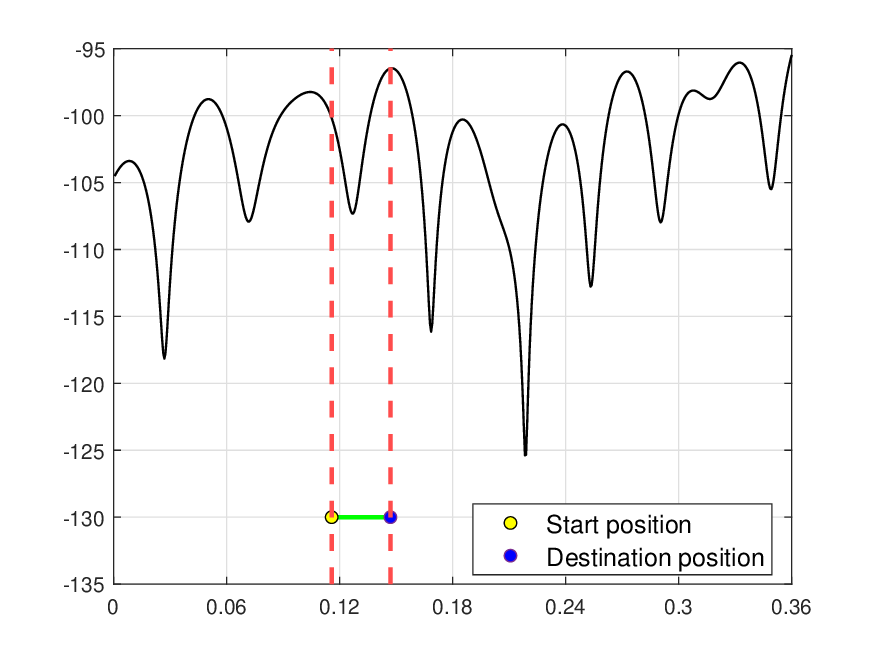} 
		\caption{Far-sighted, $T=0.5$ s.}
		\label{fig:sub13}
	\end{subfigure}
	\vspace{0.1em}
	
	\begin{subfigure}{0.32\textwidth}
		\centering
		\includegraphics[width=\linewidth]{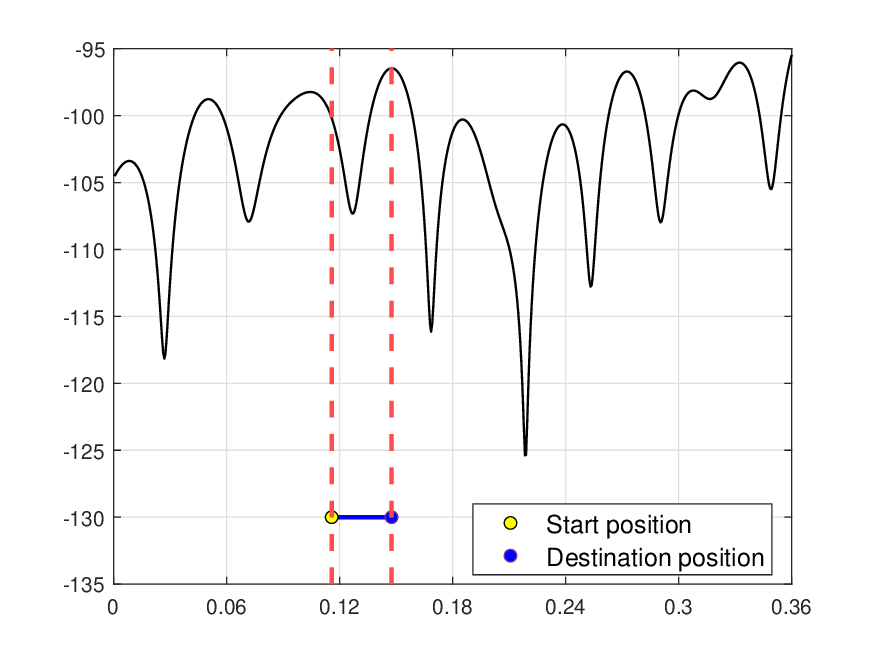}
		\caption{Graph-based, $T=1$ s.}
		\label{fig:sub21}
	\end{subfigure}
	\hfill
	\begin{subfigure}{0.32\textwidth}
		\centering
		\includegraphics[width=\linewidth]{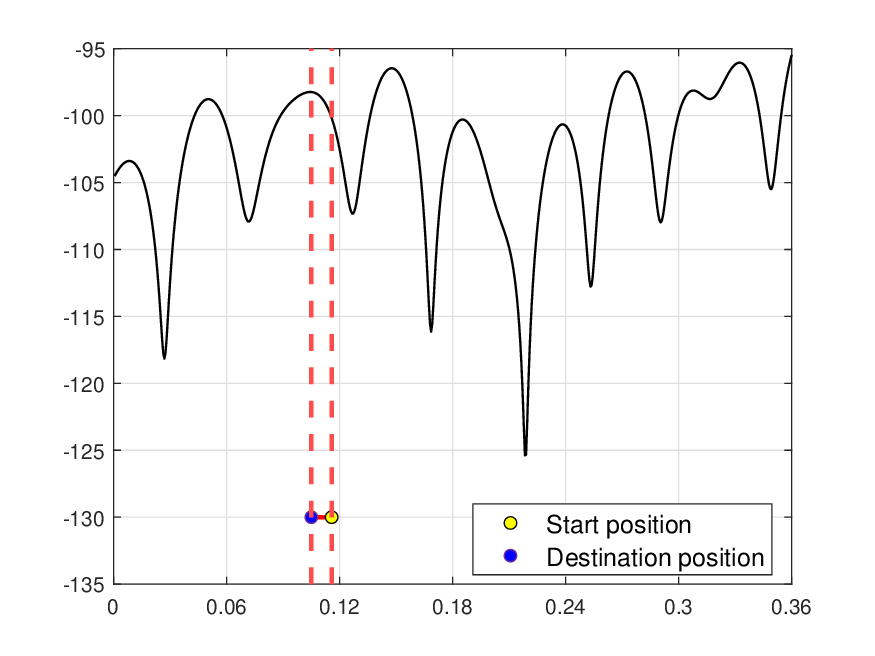}
		\caption{Myopic, $T=1$ s.}
		\label{fig:sub22}
	\end{subfigure}
	\hfill
	\begin{subfigure}{0.32\textwidth}
		\centering
		\includegraphics[width=\linewidth]{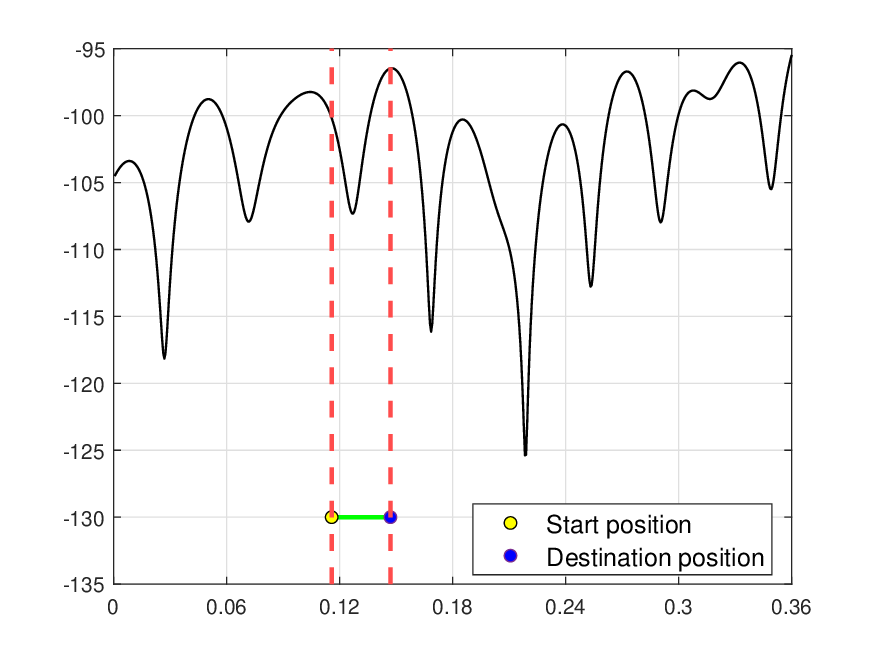}
		\caption{Far-sighted, $T=1$ s.}
		\label{fig:sub23}
	\end{subfigure}
	\vspace{0.1em}
	
	\begin{subfigure}{0.32\textwidth}
		\centering
		\includegraphics[width=\linewidth]{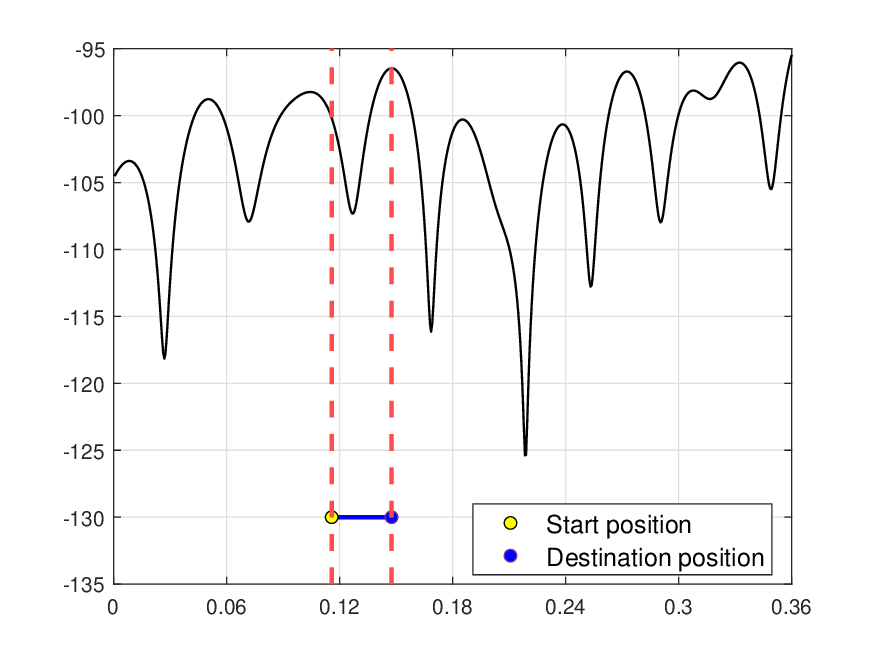}
		\caption{Graph-based, $T=2$ s.}
		\label{fig:sub31}
	\end{subfigure}
	\hfill
	\begin{subfigure}{0.32\textwidth}
		\centering
		\includegraphics[width=\linewidth]{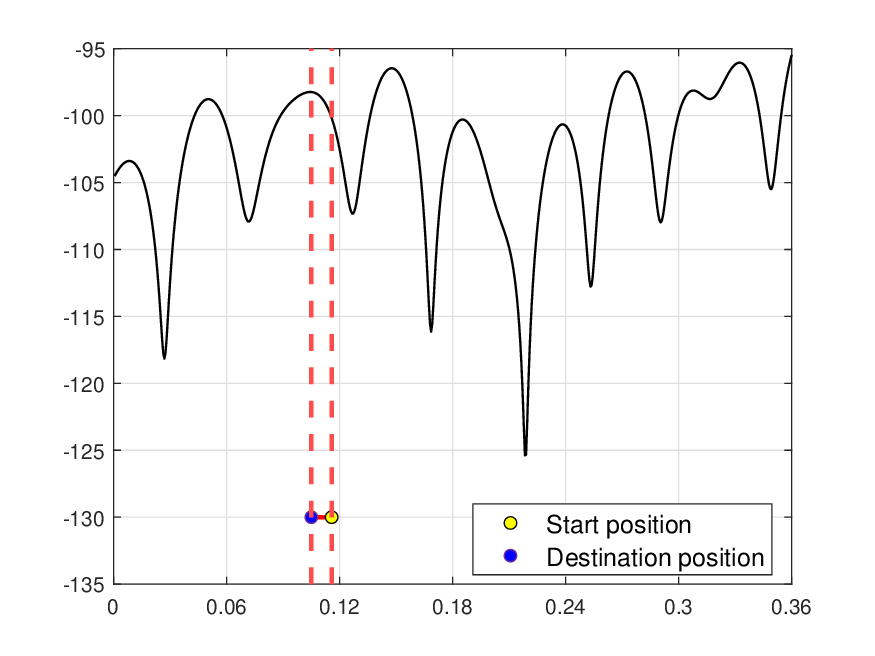}
		\caption{Myopic, $T=2$ s.}
		\label{fig:sub32}
	\end{subfigure}
	\hfill
	\begin{subfigure}{0.32\textwidth}
		\centering
		\includegraphics[width=\linewidth]{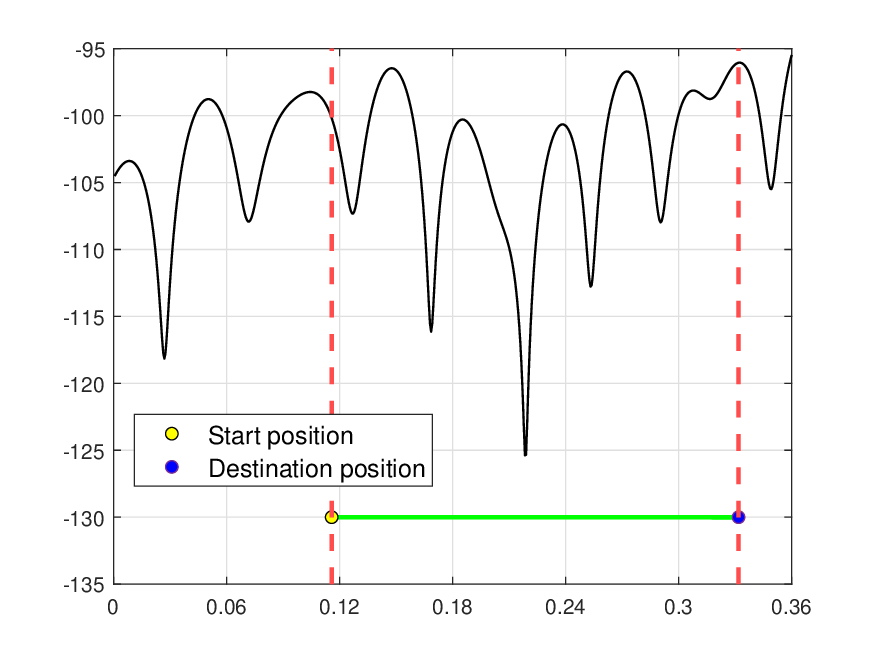}
		\caption{Far-sighted, $T=2$ s.}
		\label{fig:sub33}
	\end{subfigure}
	\caption{Trajectories of different schemes under different time durations $T$.}
	\label{fig:main_3x3}
    \vspace{-6pt}
\end{figure*}

\section{Simulation Results}
In this section, we provide numerical results to evaluate the performance of our proposed graph-based algorithm. Unless otherwise stated, the simulation parameters are set as follows. The carrier wavelength is $\lambda = 0.06$ m, and the length of the linear array is $D = 6\lambda = 0.36$ m. The initial position of the MA is randomly generated within $\mathcal{C}_t$ and its maximum velocity is $v_{\max}=2\lambda=0.12$ m/s. The duration of each time slot is set to $0.01$ s, i.e., $K=100T$. The number of grids after the discretization is $N=600$; thus, we have $\delta_s=\lambda/100$ and $d_{\max}=2$. The AoDs are randomly generated, each following a uniform distribution within $\left[0, \pi\right]$. The distance between the transmitter and the receiver is $100$ m and the path loss exponent is set to $2.8$. The transmit power is $P_t=40$ W and the average noise power at the receiver is $\sigma^2=10^{-11}$ W. All results are averaged over $1000$ channel realizations. 
\par
To validate the performance of the proposed graph-based algorithm, we consider the following three benchmark schemes for performance comparison. 
\begin{itemize}
	\item \textbf{Benchmark 1 (Myopic scheme):} Due to the random superposition of multi-path components, the channel power gain within $\mathcal{C}_t$ exhibits multiple local maxima or crests. Let $\mathcal{X}_1$ denote the set of all crests in $\mathcal{C}_t$ and $x_1=\arg\underset{ x\in\mathcal{X}_1}{\min}|x-x_0|$ denote the closest crest to $x_0$. In this scheme, the MA moves toward $x_1$ with the maximum speed and then stops at $x_1$ if $v_{\max}T \geq \lvert x_1 - x_0 \rvert$.
	\item \textbf{Benchmark 2 (Far-sighted scheme):} Let $x_2$ denote the position with the maximum channel power gain among all positions that are reachable from $x_0$ within the time duration $T$. In this scheme, the MA moves toward $x_2$ with the maximum speed and then stops at $x_2$ if $v_{\max}T \geq \lvert x_2-x_0 \rvert$.
	\item \textbf{Benchmark 3 (FPA):}  In this scheme, the antenna is fixed at the center of $\mathcal{C}_t$.
\end{itemize}
An illustration of $x_1$ and $x_2$ in Benchmarks 1 and 2 is shown in Fig. \ref{fig.9}, marked by the red and green points, respectively. The interval between the two red dashed lines represents the set of positions that can be reached from the initial position $x_0$ within the time duration $T$.

Fig. \ref{fig:main_3x3} shows the MA trajectories by different schemes under different values of the time duration $T$. It is observed that when $T=0.5$ s, the destination positions for both our proposed scheme and the myopic scheme remain the same, while that for the far-sighted scheme is farther from the initial position. As a result, the MA in the far-sighted scheme may traverse a greater number of deep-fading positions or valleys during its movement. When $T=1$ s, however, the destination position for our proposed scheme differs from that for the myopic scheme but remains the same as that for the far-sighted scheme, despite a valley in between. This is attributed to the longer time duration $T$, which allows the MA to enjoy the high channel power gain at a more distant destination position for a longer time. In contrast, for $T=2$ s, the destination position for our proposed scheme differs from both benchmarks and lies between their respective destination positions. This suggests that the proposed scheme can achieve a better trade-off between channel power gain and movement time compared to these two benchmarks.

\begin{figure}[!t]
	\centering
	\includegraphics[width=3in]{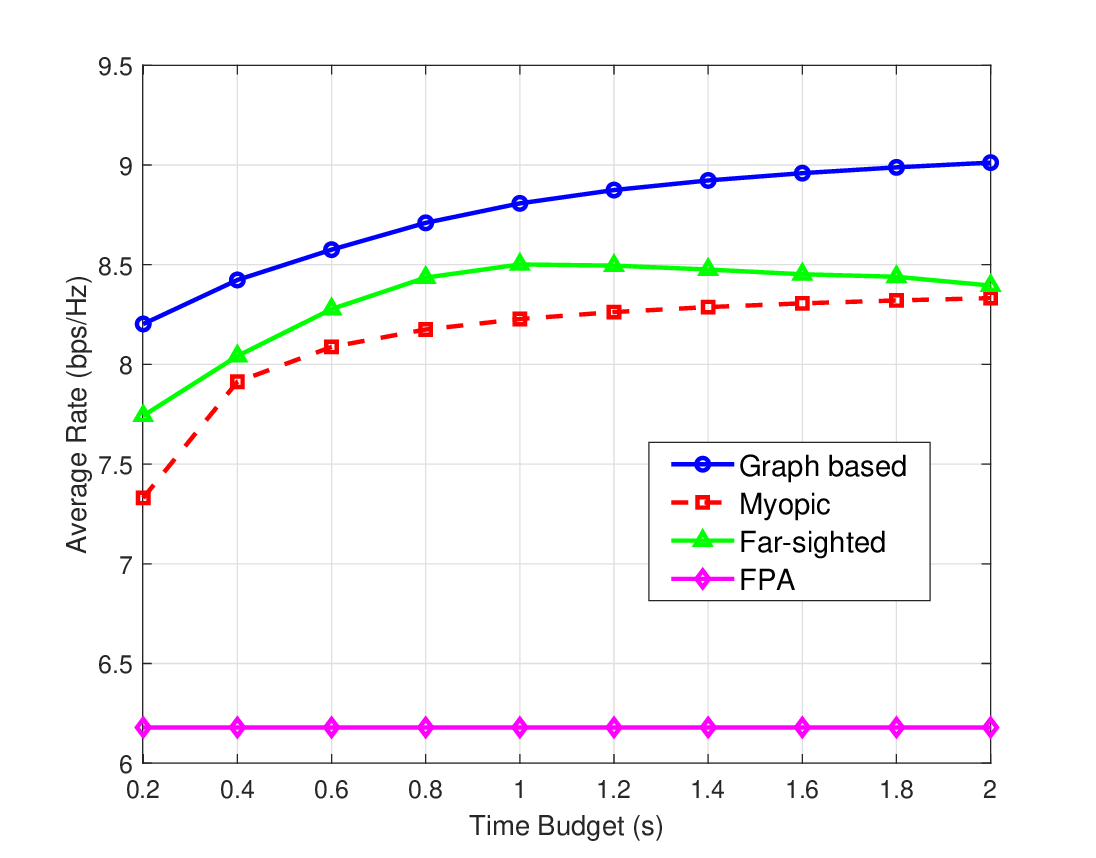}
	\caption{Average rates versus time duration.}
	\label{fig.6}
    \vspace{-9pt}
\end{figure}
In Fig. \ref{fig.6}, we plot the average data rates by different schemes against the time duration $T$. It is observed that the proposed scheme yields better rate performance than other benchmark schemes. It is interesting to note that the rate performance of the far-sighted scheme gradually decreases as $T \geq 1$ s. This is because a longer $T$ allows the MA to move to a more distant high-gain position, and it may experience more significant deep fading during the movement phase. Moreover, the FPA benchmark achieves the worst performance among all schemes considered, as expected.

\begin{figure}[!t]
	\centering
	\includegraphics[width=3in]{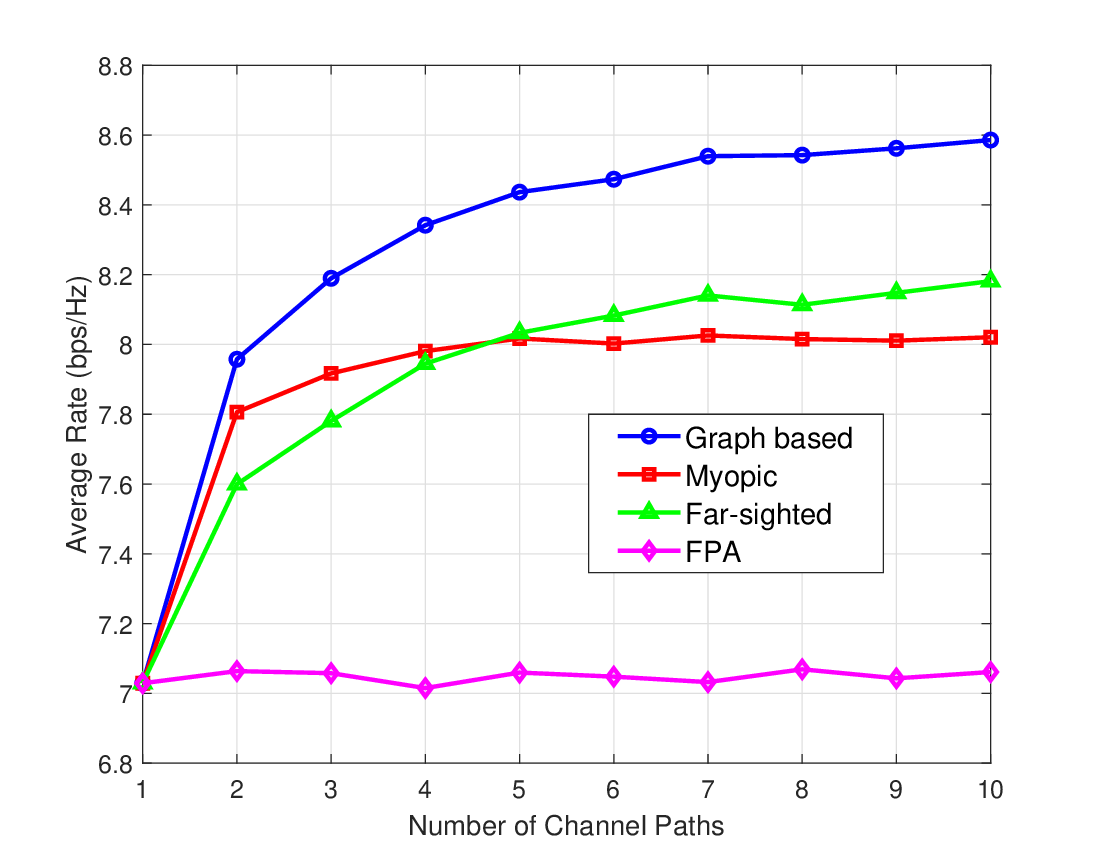}
	\caption{Average rate versus number of channel paths.}
	\label{fig.7}
    \vspace{-9pt}
\end{figure}
Fig. \ref{fig.7} plots the average data rate versus the number of transmit paths $L_t$, with $T=2$ s. It is observed that all schemes with the MA achieve an increasing average data rate as $L_t$ increases, thanks to more significant spatial diversity, while the rate performance of the FPA scheme remains approximately constant. Moreover, the performance gap between the proposed scheme and the myopic and far-sighted schemes is observed to increase with $L_t$. This is because the enhanced spatial diversity provides more flexibility for the proposed scheme to balance the movement time and channel power gain. Nonetheless, all the considered schemes are observed to achieve the same performance for $L_t=1$, since a uniform channel power gain is achieved within the movement region in this case. It is also observed that the far-sighted scheme outperforms the myopic scheme as $L_t \ge 5$. This is because the far-sighted scheme can better exploit the pronounced spatial diversity gain provided by increasing $L_t$ thanks to its larger movement range. Additionally, the sufficiently large time budget ($T=2$ s) helps mitigate the deep fading that may occur during the movement.

\begin{figure}[!t]
	\centering
	\includegraphics[width=3in]{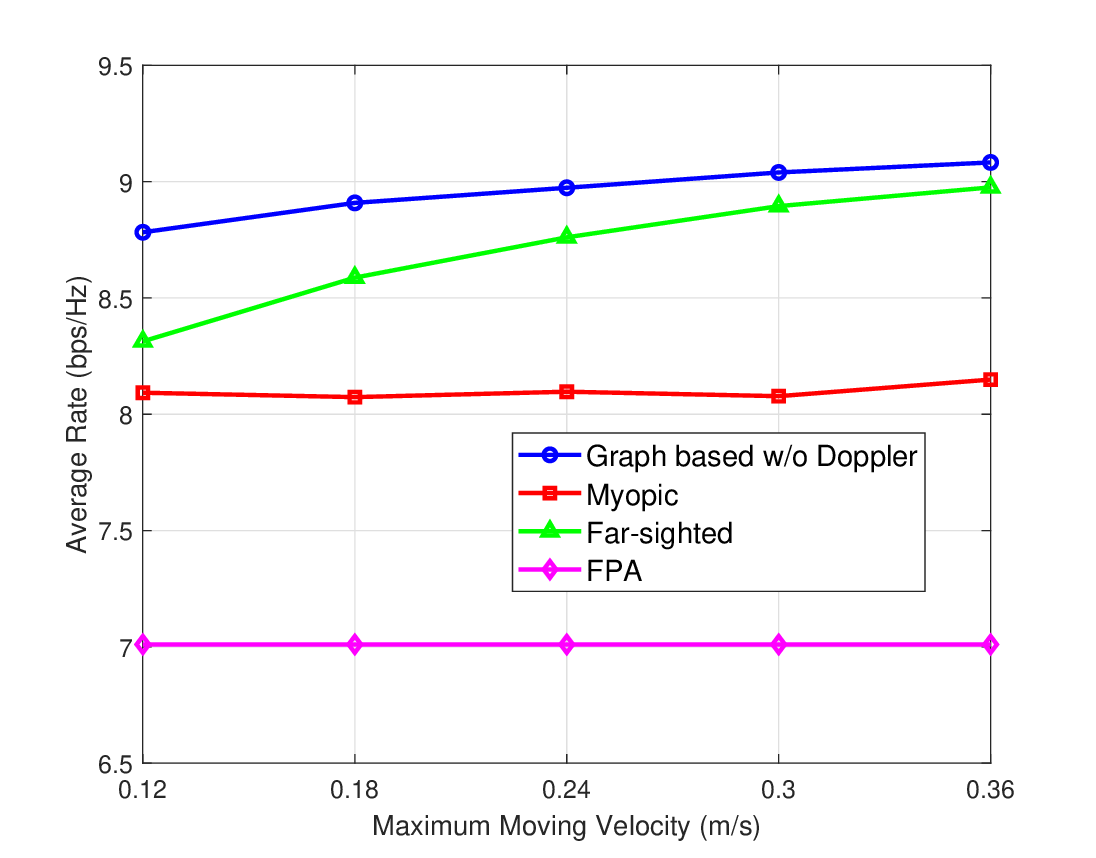}
	\caption{Average rate versus MA maximum velocity.}
	\label{fig.8}
    \vspace{-9pt}
\end{figure}
Finally, Fig. \ref{fig.8} shows the average data rate versus the MA's maximum moving velocity $v_{\max}$, with $T=2$ s. It is observed that the performance of both the proposed scheme and the far-sighted scheme improves as $v_{\max}$ increases. This is because a larger $v_{\max}$ expands the movement range within the given time budget $T$, providing more DoFs in trajectory design. Moreover, a larger $v_{\max}$ enables the MA to reach a desirable destination position more quickly, allowing it to maintain high channel power gain for a longer duration. It is also observed that the performance gap between our proposed scheme and the far-sighted scheme gradually diminishes as $v_{\max}$ increases. This is because a higher $v_{\max}$ reduces the impact of deep fading during movement, allowing the MA to explore more distant destinations with higher channel power gains. In contrast, the performance of the myopic scheme remains nearly unchanged as $v_{\max}$ increases, as its destination position is always the nearest crest, limiting its ability to explore more distant positions.\vspace{-4pt}

\section{Conclusion}
This paper investigated a new joint communication and trajectory optimization problem for a single-MA system. To achieve its optimal performance, we optimize the MA's trajectory to maximize the average data rate over a given time interval at the receiver subject to the MA's maximum velocity constraints. For the special case with two channel paths, the optimal trajectory was derived in closed-form, showing that the MA should always move at the maximum velocity towards the nearest crest. For the general multi-path case, the MA trajectory design problem was reformulated as a fixed-hop shortest path problem in graph theory and solved optimally using DP. Simulation results demonstrated that the proposed design properly balances the moving distance/time and rate performance, thereby significantly outperforming other heuristic and FPA baselines. 
\vspace{-6pt}

\appendix[Proof of Proposition 1]
First, it can be easily shown that if $x_0$ is exactly located at a valley, \textbf{Proposition 1} always holds. Next, we consider the following two cases where $x_0$ is not located at a valley. 

\textbf{Case 1:} $x^*$ can be reached from $x_0$ within the duration $T$, i.e., $\frac{\lvert x^*-x_0 \rvert}{v_{\max}\tau}\leq K$. Let $k_1 = \frac{|x^* - x_0|}{v_{\max}\tau}, k_1\in\mathcal{K},$ be the required number of time slots for the MA to reach $x^*$. Then, the MA trajectory in \eqref{eq01} becomes
\begin{equation}
	x^*[k] = \begin{cases}
		x_0+\frac{x^*-x_0}{\left|x^*-x_0\right|}v_{\max}k\tau, &\ 0 \leq k < k_1,\\
		x^*, &\ k_1 \leq k \leq K.
	\end{cases}\label{eq16}
\end{equation}
Based on \eqref{eq16}, the average rate $\frac{1}{K}\sum_{k=1}^{K}R[k]$ can be decomposed as $\frac{1}{K}\left(\sum_{k=1}^{k_1}R[k]+(K-k_1)R[K] \right)$. Consider an arbitrary feasible trajectory, denoted as $x'[k], k \in {\cal K}$, and let $R'[k]$ denote the achievable rate at the receiver for $x=x'[k]$. Evidently, we have $(K-k_1)R[K] \geq \sum_{k=k_1}^{K}R'[k]$. Next, we show $\sum_{k=1}^{k_1}R[k] \ge \sum_{k=1}^{k_1}R'[k]$. Note that moving the MA from $x_0$ to $x'$ within the first to the $k_1$-th time slot always yields an increasing channel power gain (see the green curve in Fig.\,\ref{fig3}), while moving the MA in the opposite direction always yields a decreasing channel power gain first within this period (see the red curve in Fig.\,\ref{fig3}). Hence, if the trajectory $x'[k], k \in {\cal K}$ moves the MA away from $x^*$, we have $R[k] > R'[k], 1 \le k \le k_1$, and thus, $\sum_{k=1}^{k_1}R[k] \ge \sum_{k=1}^{k_1}R'[k]$.

Thus, the optimal trajectory should always move the MA towards $x^*$. Under this premise, if the trajectory $x'[k], k \in {\cal K}$ moves the MA at a lower velocity than $v_{\max}$, it can also be seen that $R[k] > R'[k]$. By combining the above results, we have $\frac{1}{K}\sum_{k=1}^{K}R[k] \ge \frac{1}{K}\sum_{k=1}^{K}R'[k]$.
\begin{figure}[!t]
	\centering
	\includegraphics[width=3in]{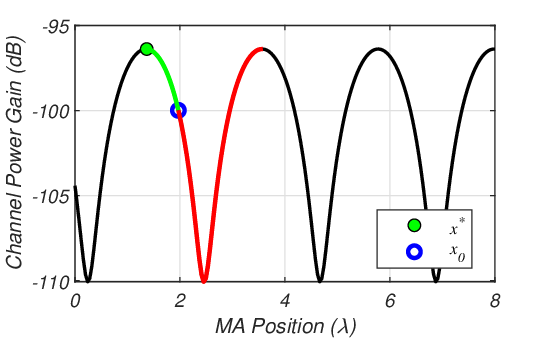} 
	\caption{Illustration of the proof of Proposition 1.}
	\label{fig3}
    \vspace{-9pt}
\end{figure}

\textbf{Case 2:} $x^*$ cannot be reached from $x_0$ within the time duration $T$, i.e., $\frac{\lvert x^*-x_0 \rvert}{v_{\max}\tau} > K$. In this case, we can conduct a similar analysis as in Case 1 to show $\frac{1}{K}\sum_{k=1}^{K}R[k] \ge \frac{1}{K}\sum_{k=1}^{K}R'[k]$. The details are omitted for brevity.

\footnotesize
\bibliographystyle{IEEEtran}
\bibliography{references.bib}
\end{document}